\documentclass{article}
\def\={\>{\buildrel \rm def \over =}\>}

\def\S{{\cal S}}        
\def\HH{\mathop{\hbox{\bf H}}} 
\def\qed{\vrule width 0.5em height 2ex depth 0pt}
\newtheorem{theorem}{Theorem}[section]
\newtheorem{lemma}[theorem]{Lemma}
\newtheorem{fact}[theorem]{Fact}

\newtheorem{corollary}[theorem]{Corollary}
\newtheorem{DDDefinition}[theorem]{Definition}
\def\enddefinition{\end{DDDefinition}\egroup\medbreak}
\def\proof{\begin{demo}{Proof}}
\def\endproof{\enspace\hfill\qed\end{demo}\medbreak}
\newenvironment{demo}[1]
{\par\medbreak\noindent{\bf #1\enskip}\rm\ignorespaces}{\par}
\makeatletter
\def\definition{\bgroup\def\@begintheorem##1##2{\trivlist
  \item[\hskip\labelsep{\bfseries ##1\ ##2}]}\begin{DDDefinition}}
\def\@listI{\leftmargin\leftmargini \parsep 1pt plus 1pt minus 1pt\topsep 1pt
plus 1pt minus 1pt\itemsep 1pt plus 1pt minus 1pt}
\let\@listi\@listI
\@listi
\def\@listii{\leftmargin\leftmarginii
 \labelwidth\leftmarginii\advance\labelwidth-\labelsep
 \topsep 1pt plus 1pt minus 1pt
 \parsep 1pt plus 1pt minus 1pt
 \itemsep \parsep}
\makeatother
\def\eqalign#1{\null\,\vcenter{\openup\jot\mathsurround=0pt
  \ialign{\strut\hfil$\displaystyle{##}$&$\displaystyle{{}##}$\hfil
      \crcr#1\crcr}}\,}
\def\[{\mathopen{\hbox{$[\![$}}}
\def\]{\mathclose{\hbox{$]\!]$}}}
\def\minus#1{\mathbin{\mbox{--}}\{#1\}}

\author{L\'aszl\'o Csirmaz\\
\small Central European University\thanks{
This research was partially supported by TAMOP-4.2.2.C-11/1/KONV-2012-0001 
and the Lendulet Program of the Hungarian Academy of Sciences}}
\date{}
\title{\bf Secret sharing on the $d$-dimensional cube}

\begin{document}
\maketitle
\begin{abstract}
\noindent
We prove that for $d>1$ the best
information ratio of the perfect secret sharing scheme
based on the edge set of the $d$-dimensional cube is exactly $d/2$. 
Using the technique developed, we also prove that the information ratio
of the infinite $d$-dimensional lattice is $d$.

\noindent{\bf Key words:} Secret sharing scheme, polymatroid, information
theory.

\noindent{\bf MSC numbers:} 94A60, 94A17, 52B40, 68R10
\end{abstract}

\section{Introduction}
In a (perfect) secret sharing scheme, a secret value is distributed in the
form of {\em shares} among the set of participants in such a way that only
qualified sets of participants can recover the secret value, while no
information about the secret is revealed by the collective share of an
unqualified subset. Consult the survey of A.~Beimel \cite{beimel-survey} 
for a general overview, or the lecture notes of C.~Padro 
\cite{padro:lecture-notes} for a gentle introduction to the topic.

The {\em information ratio} of a scheme is the ratio between the maximum
size of the shares, and the size of the secret value, while the information
ratio of the collection of qualified subsets -- the {\em access structure}
-- is the infimum of the information ratio of schemes realizing this access
structure. (In the literature the term 
{\em complexity} is also used to denote the information ratio.)
One of the main theoretical and practical problems of this area is
to determine, or give reasonable bounds for, the information ratio of
different access structures. 

An access structure is {\em graph based} when the minimal qualified subsets are just
the edges (two element subsets) of a (connected) graph with participants 
as vertices. The access structure determined by the complete graph on $n$ vertices 
is the 2-out-of-$n$ threshold structure: any two element subset is
qualified, and any single element subset is unqualified.

Determining the exact value of the information ratio 
of arbitrary graphs is a very difficult problem.
It has been determined for most of the graphs with at most
six vertices \cite{brickell-stinson, kn:blundo, dijk, jackson}, and for the 
majority of graphs with seven vertices
\cite{seven}. The exact ratio is also known for a couple of infinite 
graph families.
For example, complete graphs have information ratio 1;
paths on four or more vertices as well as cycles of length at least 5 have
information ratio 3/2 \cite{padro:lecture-notes}. Trees have
information ratio $2-1/k$ for some easily computable integer $k$
\cite{csirmaz-tardos}; and graphs with girth ${}\ge 6$ and no
neighboring $\ge3$ degree vertices also have ratio $2-1/k$
for some integer $k$ 
\cite{csirmaz-ligeti}.

Brickell and Davenport in \cite{brickell-davenport} proved that a graph has
information ratio 1 if and only if it is a complete multipartite graph.
The information ratio is bounded from above by
$(d+1)/2$, where $d$ is the maximal degree, see \cite{kn:stinson}. In
\cite{blundo-tight} Blundo et al. constructed, for each $d\ge 2$, and
infinite family of $d$-regular graphs with exactly this ratio. As the
maximal degree of a graph on $n$ vertices is at most $n-1$, the information
ratio
of any graph on $n$ vertices is at most $n/2$. This upper bound was improved 
to $c\cdot n/\log_2 n$ for some large explicit constant $c$ in \cite{erdos-pyber}, 
which is still the best
general upper bound on the information ratio of graphs on $n$ vertices. 

The smallest $d$-regular graph with information ratio $(d+1)/2$ in \cite{blundo-tight}
has $n\approx 6^d$ vertices which shows that for some graphs the information
ratio is at least $c'\cdot\log_2 n$ for some small positive constant $c'$.
In \cite{kn:csgraph} a quite natural example for
a $d$-regular graph was considered: the edge graph of the $d$-dimensional
cube, giving upper and lower bounds for the information ratio. Determining the
exact value, however, remained an open problem. In this note, using a
carefully crafted induction hypothesis, we show that this information ratio
is exactly $d/2$.
\begin{theorem}\label{thm:0}
The information ratio of the edge graph of the $d\ge 2$ dimensional cube is
$d/2$.
\end{theorem}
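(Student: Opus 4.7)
The upper bound $d/2$ is already established in \cite{kn:csgraph}, so the novel content is the matching lower bound, on which I focus. My approach is the standard polymatroid method. Given a scheme realising the $d$-cube edge graph with secret entropy $H_0$, form the normalised function $f(A)=\HH(s_A)/H_0$ on subsets $A$ of the vertex set $V$, extended by $f(A\cup\{*\})=\HH(s_A,S)/H_0$, where $*$ is a formal symbol for the secret. Shannon's inequalities make $f$ monotone and submodular, while the scheme axioms translate into $f(\{*\})=1$, $f(A\cup\{*\})=f(A)$ whenever $A$ contains a cube edge, and $f(A\cup\{*\})=f(A)+1$ whenever $A$ is independent in the cube. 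The information ratio equals the infimum of $\max_{v\in V}f(\{v\})$ over all admissible $f$, so the task is to prove $\max_{v\in V}f(\{v\})\ge d/2$ for every such $f$.

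The proof proceeds by induction on $d$. The base case $d=2$ is the $4$-cycle $C_4=K_{2,2}$, which is complete bipartite and so has information ratio $1=d/2$ by \cite{brickell-davenport}. For the induction step, decompose the $d$-cube into two antipodal $(d-1)$-subcubes $C^0,C^1$ joined by the perfect matching $M$ along the new direction. The restriction of $f$ to either $C^i$ is an admissible polymatroid for the $(d-1)$-cube access structure, so by induction each subcube contains a vertex of $f$-value at least $(d-1)/2$. To gain the extra $1/2$ needed for the $d$-cube bound, one must exploit the $2^{d-1}$ edges of $M$ via the relations $f(\{v,v'\}\cup\{*\})=f(\{v,v'\})$ for each matched pair $\{v,v'\}$, combined with submodularity to couple the two subcubes.

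The main obstacle, and the reason the introduction stresses a ``carefully crafted induction hypothesis'', is that the bare statement ``$\max_v f(\{v\})\ge d/2$'' is too weak to be inductive on its own: the extra $1/2$ supplied by the matching cannot be extracted unless the hypothesis simultaneously controls some averaged or weighted quantity recording how shares in $C^0$ interact with shares in $C^1$. I would therefore look for a strengthened invariant, most likely a lower bound on a linear combination of $f$-values over a carefully chosen family of subsets of the cube (matched pairs, antichains organised by Hamming level, or suitable chains of subsets), subject to two competing requirements: (i) on singletons it must specialise to the desired bound $d/2$, and (ii) it must be preserved under the ``double-and-match'' construction that builds the $d$-cube from two copies of the $(d-1)$-cube joined by $M$. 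Once such a potential is in place, the induction step should reduce to a routine Shannon-inequality manipulation using only the edge and independent-set axioms together with submodularity; identifying the correct strengthening is where I expect essentially all of the difficulty to lie.
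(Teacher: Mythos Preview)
Your outline correctly identifies the overall architecture the paper uses: polymatroid method, induction on $d$, and the decomposition of $C^d$ into two copies of $C^{d-1}$ joined by a perfect matching. You also correctly diagnose the central difficulty, that the naive hypothesis $\max_v f(\{v\})\ge d/2$ is not self-strengthening. However, the proposal stops precisely at the point where the work begins: you do not name the strengthened invariant, and you say yourself that ``identifying the correct strengthening is where I expect essentially all of the difficulty to lie.'' That is not a proof; it is a statement of what a proof would have to contain. For the record, the paper's invariant is the inequality
\[
\sum_{v\in C^d} f(v)\ \ge\ \[A_d,B_d\] + (d-1)2^{d-1},
\qquad
\[A,B\]\=\sum_{b\in B} f(bA)-\sum_{a\in A} f(A\minus a),
\]
where $A_d\cup B_d$ is the bipartition of $C^d$. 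The bracket $\[A_d,B_d\]$ is the ``potential'' you were looking for; the induction step shows $\[A_d,B_d\]+\[A'_d,B'_d\]\ge \[A_dA'_d,B_dB'_d\]+2^d$ using one application of strong submodularity per matching edge, and at the end $\[A_d,B_d\]\ge 2^{d-1}$ by strong monotonicity. Note that this bounds the \emph{sum} of the $f(v)$, not the maximum; the paper then invokes vertex-transitivity of $C^d$ to equate the average and worst-case ratios, a step your sketch omits.

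There is also a factual slip at the start: the upper bound $d/2$ is \emph{not} in \cite{kn:csgraph}; that paper only gives $(d+1)/2$ (the Stinson bound for a $d$-regular graph). The present paper supplies a new construction, distributing the secret via the ${d\choose 2}$ two-faces through each vertex, to get exactly $d/2$. So both directions of the theorem are new here, and your proposal covers neither.
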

As the $d$-dimensional cube has exactly $n=2^d$ vertices, this theorem
yields the improved lower bound in the corollary below:
\begin{corollary}
There is an explicit positive constant $c>0$ such that
for infinitely many $n$ the largest possible information ratio of a graph on
$n$ vertices is between $0.5\cdot\log_2n$ and $c\cdot n/\log_2 n$.
\end{corollary}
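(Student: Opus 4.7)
The plan is to combine the exact value supplied by Theorem~\ref{thm:0} with the Erd\H{o}s--Pyber upper bound already cited in the introduction. Specialize to the infinite sequence $n=2^d$ with $d\ge 2$. For each such $n$, the edge graph of the $d$-dimensional cube is a graph on exactly $n$ vertices, and by Theorem~\ref{thm:0} its information ratio equals $d/2=\tfrac12\log_2 n$. Hence the maximum information ratio attained by any graph on $n$ vertices is at least $\tfrac12\log_2 n$, giving the lower bound stated in the corollary.

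For the matching upper bound, quote the result of \cite{erdos-pyber}: there exists an explicit constant $c>0$ (the same constant cited in the introduction) such that every graph on $n$ vertices has information ratio at most $c\cdot n/\log_2 n$. This is a uniform statement, so it applies in particular along the subsequence $n=2^d$.

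Combining these two observations for $n$ of the form $2^d$ yields the sandwich $\tfrac12\log_2 n \le \max \le c\cdot n/\log_2 n$, which is exactly the content of the corollary. No further estimates are needed: the only non-trivial ingredient is Theorem~\ref{thm:0}, which is already proved, and the external Erd\H{o}s--Pyber bound, which is quoted verbatim. The ``obstacle'' therefore is purely bookkeeping---checking that the constant promised by \cite{erdos-pyber} is indeed explicit and independent of the subsequence, and that $d\ge 2$ (required for Theorem~\ref{thm:0}) translates to $n\ge 4$, still leaving infinitely many admissible values of $n$.
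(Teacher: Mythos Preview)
Your proposal is correct and mirrors the paper's own reasoning exactly: the corollary is stated as an immediate consequence of Theorem~\ref{thm:0} (giving the lower bound $\tfrac12\log_2 n$ along $n=2^d$) together with the Erd\H{o}s--Pyber upper bound already cited in the introduction. The paper gives no further argument beyond this, so your write-up is essentially a fleshed-out version of what the paper leaves implicit.
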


There is huge gap between the lower and upper bounds. It is an open problem
to narrow this gap. It is interesting to note that the lower bound comes
from a sparse graph (the maximal degree is $o(n)$), while the upper bound
requires dense ($\Omega(n)$ average degree), but not very dense
($n-o(n)$ average degree) graphs, see \cite{beimel-farras-mintz}.

Using the same technique as in the proof of Theorem \label{thm:8}
we can also determine the information ratio 
of the whole $d$-dimensional lattice $L^d$, which was also left open in
\cite{kn:csgraph}.
\begin{theorem}\label{thm:01}
For $d\ge2$ the information ratio of the $d$-dimensional lattice $L^d$ is
$d$.
\end{theorem}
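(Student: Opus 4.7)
\medskip
\noindent\textbf{Proof proposal for Theorem~\ref{thm:01}.} The plan is to prove matching upper and lower bounds on the information ratio of $L^d$, reusing the framework developed for the cube in Theorem~\ref{thm:0}.

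For the upper bound, I would give an explicit scheme of ratio~$d$. Decompose the edges of $L^d$ into the $d$ coordinate-direction subgraphs $E_1,\ldots,E_d$; each $E_i$ is a disjoint union of bi-infinite paths, and every vertex lies on exactly one path of each $E_i$. The scheme realizing $Q_d$ with ratio $d/2$ uses that in the cube every vertex has exactly one neighbor per direction; in $L^d$ every vertex has two such neighbors, so I would run an analogous construction on each infinite path, using twice as much randomness per direction. Each direction then contributes a unit to the share size at every vertex, yielding total ratio~$d$.

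For the lower bound, I would reuse the inductive argument of Theorem~\ref{thm:0} almost verbatim, modifying it for the infinite setting. Write $L^d = L^{d-1}\times L^1$: the lattice is an infinite chain of parallel ``slices'' isomorphic to $L^{d-1}$, consecutive slices being joined by perpendicular perfect matchings. The cube induction works across the two copies of $Q_{d-1}$ that form $Q_d$, producing via a polymatroid/Shannon chain a gain of half a unit of share entropy per dimension. In $L^d$ the same chain applies in both vertical directions (``upward'' and ``downward'' along the extra coordinate), so the gain per dimension doubles to a full unit. Carrying out the induction on a finite window of $2N+1$ consecutive slices and letting $N\to\infty$ forces some interior vertex to carry share entropy at least $d\cdot\HH(s)$.

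The main obstacle will be running the induction on a boundary-free infinite graph. The cube argument is anchored at the corner vertices of $Q_d$, where certain entropy terms vanish trivially; $L^d$ has no such anchor, so the induction must be carried out on a finite window whose boundary contributions must be shown to be of lower order than the interior term. A secondary subtle point is extracting the factor-of-two gain per dimension: each coordinate of $L^d$ offers two independent inductive directions in place of the cube's single one, but verifying that both contributions can be harvested from the same scheme without double-counting is the technical heart of the argument.
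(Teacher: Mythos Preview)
Your upper-bound construction does not give ratio $d$. Decomposing $L^d$ into the $d$ coordinate-direction subgraphs $E_1,\dots,E_d$ and sharing independently on each family of bi-infinite paths costs $3/2$ per direction, not $1$: the infinite path is not complete multipartite, and its information ratio is $3/2$ (this is stated earlier in the paper). So your scheme has ratio $3d/2$. The description of the cube scheme as ``one neighbor per direction'' is also off --- the ratio-$d/2$ scheme for $C^d$ is built from its $2$-faces, not from edges grouped by direction. The paper's upper bound is obtained differently: tile $L^d$ in chessboard fashion by translates of $C^d$ (anchored at points with all coordinates of the same parity) so that every vertex lies in exactly two cubes and every edge in exactly one, then run the cube scheme independently on each tile; this gives $2\cdot(d/2)=d$ at every vertex.

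Your lower-bound outline is closer to the paper's, and you correctly identify the need to work on a finite window and pass to the limit. But the mechanism ``two vertical directions, so the gain doubles'' is not what actually happens. In the paper the induction on dimension passes from $L^{d}_k$ to $L^{d+1}_k$ by stacking $k$ slices and applying (a mild generalization of) the cube's submodular chain $k-1$ times, once per adjacent pair of slices; the accumulated gain is $(k-1)\cdot k^{d}$, which after normalizing by $k^{d+1}$ vertices contributes $1-1/k\to 1$ per new dimension. The key technical ingredient you are missing is the strengthened inductive lemma that carries an ``already-processed'' block $A^{*},B^{*}$ through the step, so that the $k-1$ applications can be chained. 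Without it, the bare cube inequality only lets you compare two consecutive slices and does not accumulate across the whole window.
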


\bigskip

This paper is organized as follows. In Section \ref{s:defs} we give the
definitions necessary to state and prove our theorems. Section \ref{s:cube}
deals with the case of the $d$-dimensional cube, Section \ref{s:lattice}
with the lattice. Finally Section \ref{s:conclusion} concludes the paper,
and lists some related problems. For undefined notions and for
introduction to secret sharing consult \cite{beimel-survey, kn:blundo,
padro:lecture-notes}, and
for basics in information theory see \cite{kn:csiszar}.

\section{Definitions}\label{s:defs}

In this section we recall the notions we shall use later. First we 
give a formal definition of a graph based
a perfect secret sharing scheme, then connect it to
submodular functions.

Let $G=\langle V,E\rangle$ be a graph with vertex set $V$ and edge set $E$.
A subset $A$ of $V$ is {\em independent} if there is no edge between
vertices in $A$. A {\em covering} of the graph $G$ is a collection of
subgraphs of $G$ such that every edge is contained in one of the (not
necessarily spanned) subgraphs in the collection. The collection is {\em
$k$-covering} if every edge of $G$ is covered at least $k$ times.
For subsets of vertices we usually omit the $\cup$ sign, and write $AB$ for
$A\cup B$. Also, if $v\in V$ is a vertex then $Av$ denotes $A\cup\{v\}$. 

A {\em perfect secret sharing scheme $\S$} for a graph $G$ is a collection
of random variables $\xi_v$ for each $v\in V$ and the random variable $\xi$ (the secret)
with a joint distribution of $\langle \xi_v\rangle$ and $\xi$ so that 
\begin{itemize}
\item if $vw$ is an edge in $G$, then $\xi_v$ and $\xi_w$ together determine
     $\xi$,
\item if $A$ is an independent set, then $\xi$ and the collection
$\{\xi_v\,:\,v\in A\}$ are statistically independent.
\end{itemize}
The {\em size} of the random variable $\xi$ is measured by its entropy, or
information content, and is denoted by $\HH(\xi)$, see \cite{kn:csiszar}.
The {\em information ratio} for a vertex $v$ (participant)
of the graph $G$ is $\HH(\xi_v)/\HH(\xi)$.
This value tells how many bits of information $v$ must remember for 
each bit in the secret. The {\em worst case} and {\em average} information
ratio of $\S$ are the highest and average information ratio among all 
participants, respectively.

The worst case (average) information ratio of a graph $G$ is the
infimum of the worst case (average) information ratio of all perfect secret 
sharing schemes
$\S$ defined on $G$. 

\smallskip
Let $\S$ be a perfect secret sharing scheme based on the graph $G$ with
the random variable $\xi$ as secret, and $\xi_v$ for $v\in V$ as shares.
For each subset $A$ of the vertices let us define 
$$
       f(A) \= { \HH( \{ \xi_v \,:\, v\in A\} ) \over \HH(\xi) }.
$$
Clearly, the average information ratio of $\S$ is the average of $\{ f(v)\,:\, 
v\in V\}$, and the worst case information ratio is the maximal value in this 
set. Using standard properties of the entropy function,
cf.~\cite{kn:csiszar}, we have
\begin{itemize}
\item[(a)] $f(\emptyset)=0$, and in general $f(A)\ge 0$ (positivity);
\item[(b)] if $A\subseteq B \subseteq V$ then $f(A)\le f(B)$ (monotonicity);
\item[(c)] $f(A) + f(B) \ge f(A\cap B) + f(A\cup B)$ (submodularity).
\end{itemize}
For two random variables $\eta$ and $\xi$, the value
of $\eta$ determines the value of $\xi$ iff $\HH(\eta\xi) = \HH(\eta)$, and 
$\eta$ and $\xi$ are (statistically) independent iff $\HH(\eta\xi) = 
\HH(\eta) + \HH(\xi)$. Using these facts and the definition of the 
perfect secret sharing scheme, we also have
\begin{itemize}
\item[(d)] if $A\subseteq B$, $A$ is an independent set and $B$ is not, then
   $f(A) + 1 \le f(B)$ (strong monotonicity);
\item[(e)] if neither $A$ nor $B$ is independent but $A\cap B$ is so, then
  $f(A)+f(B) \ge 1+f(A\cap B)+f(A\cup B)$ (strong submodularity).
\end{itemize}
The so-called entropy method can be rephrased as 
follows. Prove that for {\em any} real-valued function $f$ satisfying 
properties (a)--(e) the average (or largest) value of $f$ on the vertices 
is at least $\rho$. Then, as functions coming from secret sharing
schemes also satisfy these properties, conclude, that the average (or worst
case) information ratio of $G$ is also at least $\rho$. We note that this
method is not universal, as properties (a)--(c) are too weak to
capture exactly the functions coming from entropy.

\smallskip

We frequently use the submodular (c) and the strong submodular (e) properties 
in the following rearranged form whenever $A$, $X$, and $Y$ are disjoint
subsets of the vertex set $V$:
\begin{itemize}
\item[(c$'$)] $f(AX) - f(A) \ge f(AXY) - f(AY)$;
\end{itemize}
moreover, if $A$ is independent (i.e.~empty), $AX$ and $AY$ are not, then
\begin{itemize}
\item[(e$'$)] 
$f(AX) - f(A) \ge f(AXY) - f(AY) + 1$.
\end{itemize}
In particular, if both $X$ and $Y$ contain an edge (and they are disjoint), then 
$f(X) \ge f(XY) - f(Y) + 1$.

The proof of the following easy fact is omitted:\medbreak

\begin{fact}\label{fact:spanned}
Suppose $G_2$ is a spanned subgraph of $G_1$. The worst case (average)
information ratio of $G_1$ is at least as large as the worst case (average)
information ratio of $G_2$. 
\hfill\qed
\end{fact}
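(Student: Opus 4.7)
The plan is to prove the fact by restriction. Given an arbitrary perfect secret sharing scheme $\S_1$ on $G_1$ with secret $\xi$ and shares $\{\xi_v\}_{v\in V(G_1)}$, I would define $\S_2$ on $G_2$ to have the same secret $\xi$ and, for each $v\in V(G_2)$, the same share $\xi_v$; the shares of vertices in $V(G_1)\setminus V(G_2)$ are simply discarded.

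The key verification is that $\S_2$ is a valid perfect scheme on $G_2$, and here the spanning (induced) property of $G_2$ inside $G_1$ is essential. Every edge $vw$ of $G_2$ is also an edge of $G_1$, so the shares $\xi_v,\xi_w$ already determine $\xi$ in $\S_1$ and hence in $\S_2$. Conversely, if $A\subseteq V(G_2)$ is independent in $G_2$, then $A$ must be independent in $G_1$ as well, for any $G_1$-edge whose both endpoints lie in $V(G_2)$ is by the spanning assumption an edge of $G_2$; thus $\xi$ remains statistically independent of $\{\xi_v\}_{v\in A}$. Since shares are preserved under restriction, $f_{\S_2}(v)=f_{\S_1}(v)$ for every $v\in V(G_2)$.

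The worst case information ratio of $\S_2$ is therefore at most the worst case ratio of $\S_1$, and taking the infimum over $\S_1$ yields the worst case half of Fact~\ref{fact:spanned}. For the average part the same restriction gives a scheme with average $\frac{1}{|V(G_2)|}\sum_{v\in V(G_2)}f_{\S_1}(v)$, which is trivially bounded by $\max_{v\in V(G_1)}f_{\S_1}(v)$; to obtain the sharper comparison against the average ratio of $\S_1$ one should pass to a near-optimal $\S_1$ whose per-vertex contributions are (approximately) equalized, e.g.\ by combining several independent copies with relabelings compatible with the structure of $G_1$, so that the sum over any subset $V(G_2)$ is close to $|V(G_2)|/|V(G_1)|$ times the total. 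The validity check for $\S_2$ and the worst case comparison are entirely routine; the main obstacle in a fully general treatment is precisely this last equalization step for the average ratio.
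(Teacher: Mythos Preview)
The paper does not prove this fact at all; it explicitly says ``the proof of the following easy fact is omitted'' and places a \qed\ after the statement. So there is no proof in the paper to compare your attempt against. I can still assess your argument on its own.

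Your treatment of the worst case is correct and is the standard restriction argument: every edge of $G_2$ is an edge of $G_1$, and---because $G_2$ is \emph{induced}---every $G_2$-independent set is $G_1$-independent, so the restricted scheme is perfect and the vertexwise ratios are unchanged. Taking infima gives the inequality.

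For the average case you rightly notice that restriction alone is insufficient, since $\frac{1}{|V(G_2)|}\sum_{v\in V(G_2)} f_{\S_1}(v)$ need not be bounded by $\frac{1}{|V(G_1)|}\sum_{v\in V(G_1)} f_{\S_1}(v)$. Your proposed fix---symmetrize $\S_1$ by averaging over relabelings---only works when $G_1$ has a vertex-transitive automorphism group (as the cube and the full lattice $L^d$ do), and you honestly flag this as ``the main obstacle.'' That caution is warranted: the obstacle is not merely technical. For arbitrary finite connected $G_1$ the average statement is \emph{false}. For instance, take $G_2=C^d$ with $d\ge 4$, so the average ratio of $G_2$ is $d/2\ge 2$ by Theorem~\ref{thm:1}; form $G_1$ by attaching a path of length $N$ at one vertex of $C^d$. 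Running the cube scheme on $C^d$ and an optimal path scheme on the tail gives a perfect scheme on $G_1$ whose average ratio tends to $3/2<d/2$ as $N\to\infty$, while $G_2$ remains an induced subgraph of $G_1$.

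So your worst case proof is complete, and your instinct that the average case needs an extra hypothesis is exactly right; without something like vertex-transitivity of $G_1$ (which does hold in every place the paper actually invokes the fact) the average half of the statement cannot be proved because it does not hold.
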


\section{The case of the cube}\label{s:cube}

The {\em $d$-dimensional cube}, denoted here by $C^d$, is the following graph. 
Its vertices
are 0--1 sequences of length $d$. Two vertices are connected by an edge if
the sequences differ in exactly one place. This cube can be embedded into
the $d$-dimensional Euclidean space. Points with all coordinates in the set
$\{0,1\}$ are the vertices, and two vertices are connected if their
distance is $1$. 

The $d$-dimensional cube has $2^d$ vertices, $d\cdot 2^{d-1}$ edges
(they correspond to one-di\-men\-sion\-al affine subspaces in the embedding),
and each vertex has degree $d$.
The two-di\-men\-sion\-al subspaces are squares, i.e.~cycles of
length four, we call them {\it $2$-faces}. Each vertex $v$ 
is adjacent to $d\choose 2$ such $2$-face, as any pair of edges starting 
from $v$ spans a $2$-face. Consequently the number of $2$-faces is $2^{d-2} 
{d \choose 2}$. For any edge there are exactly $(d-1)$ many $2$-faces 
adjacent to that edge. It means that $2$-faces, as subgraphs, constitute
a $(d-1)$-cover of $C^d$.

\medbreak
\begin{theorem}\label{thm:1}
The information ratio of the $d \ge 2$ dimensional cube is $d/2$.
\end{theorem}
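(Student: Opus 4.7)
My plan establishes both directions of the equality. For the upper bound $d/2$ I would construct a scheme using the $2$-face structure, separating two parities of $d$. When $d$ is even, partition the coordinate directions into $d/2$ unordered pairs; each pair's edges form $2^{d-2}$ disjoint $2$-faces (each a copy of $K_{2,2}$), on which one places the Brickell--Davenport ratio-$1$ scheme for the same $1$-bit secret $s$, with fresh randomness per cycle and per pair. Each vertex lies in exactly $d/2$ of these 4-cycles, so its share has size $d/2$. When $d$ is odd a half-integer pairing is impossible, but the full 2-face covering still suffices: pick vectors $a_1,\ldots,a_d\in\mathbb{F}_2^{d-1}$ so that any $d-1$ of them form a basis (for instance $a_i=e_i$ for $i<d$ and $a_d=e_1+\cdots+e_{d-1}$), take the secret $s\in\mathbb{F}_2^{d-1}$, and on each 2-face of direction pair $\{i,j\}$ distribute the one-bit value $(a_i+a_j)\cdot s$ using the $C_4$-scheme. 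Two adjacent vertices in direction $i$ recover $\{(a_i+a_j)\cdot s:j\ne i\}$, a $(d-1)\times(d-1)$ linear system that is non-singular for odd $d$ and so determines $s$; an independent set restricts to an independent set inside every 4-cycle and leaks no piece. The rate is $\binom{d}{2}/(d-1)=d/2$.

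For the lower bound I would use the entropy method with the polymatroid axioms (a)--(e). Since $C^d$ is vertex-transitive and (a)--(e) are preserved under averaging $f$ over $\mathrm{Aut}(C^d)$, we may assume $f$ is symmetric, so it suffices to show $\sum_{v\in V(C^d)} f(v)\ge d\cdot 2^{d-1}$. I would proceed by induction on $d$: split $C^d$ into two $(d-1)$-subcubes $C_0$, $C_1$ joined by a perfect matching $M$ of $2^{d-1}$ edges along direction $d$. Restricting $f$ to subsets of $V(C_i)$ yields a polymatroid on $C^{d-1}$ (since each $C_i$ is a spanned subgraph), so the inductive hypothesis in each half gives $\sum_{V(C_i)} f(v)\ge(d-1)\cdot 2^{d-2}$, and together the two halves contribute $(d-1)\cdot 2^{d-1}$, falling short of the target by exactly $2^{d-1}$. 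This deficit must be extracted from the matching: for every matching edge $v_0v_1$ with $v_0\in V(C_0)$, $v_1\in V(C_1)$, an application of strong submodularity~(e$'$) to sets $X\subseteq V(C_0)$, $Y\subseteq V(C_1)$ containing an edge of the respective subcube incident to $v_0$ respectively $v_1$ gives a $+1$ contribution, and summing over the $2^{d-1}$ matching edges supplies exactly the missing $2^{d-1}$.

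The main obstacle, and the reason the induction hypothesis must be ``carefully crafted,'' is that the bare statement $\sum_v f(v)\ge d\cdot 2^{d-1}$ cannot absorb the cross-subcube terms $f(X\cup Y)$ produced by (e$'$): those subsets straddle the hyperplane separating $C_0$ from $C_1$ and are not controlled by the inductive hypothesis applied to either half in isolation. I would therefore strengthen the hypothesis to a bound involving conditional increments of the form $f(B\cup\{v\})-f(B)$, where $B$ ranges over independent sets in the subcube opposite $v$, or a weighted combination of such increments, so that applying (e$'$) across the matching converts two copies of the strengthened IH on $C^{d-1}$ into the strengthened IH on $C^d$. Tuning these weights so that the strengthened statement reduces to the target inequality when $B=\emptyset$ and simultaneously the recursion closes, plus verifying the base case $d=2$ (where $f(v)\ge 1$ on $C_4$ follows immediately from axiom~(d) applied to any edge), is the delicate bookkeeping that forms the technical heart of the proof.
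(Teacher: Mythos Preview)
Your upper-bound constructions are correct. The even-$d$ scheme (pair up the $d$ directions and run one ideal $C_4$-scheme per resulting $2$-face, so that every edge lies in exactly one $C_4$ and every vertex in exactly $d/2$ of them) is actually simpler than the paper's; your odd-$d$ scheme is the paper's construction specialised to the two-element field, and your parity check that the $(d-1)\times(d-1)$ system is nonsingular exactly when $d$ is odd is right. The case split is unnecessary, however: the paper uses a single construction for all $d\ge 2$, assigning a generic vector of $F^{d-1}$ to each $2$-face over a large enough field so that the $d-1$ vectors on the $2$-faces through any given edge always span.

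For the lower bound you have correctly identified both the architecture (induct on $d$ via the two $(d-1)$-subcubes joined by a matching) and the obstruction (the bare inequality $\sum_v f(v)\ge d\cdot 2^{d-1}$ does not close because (e$'$) across the matching produces $f$-values on sets straddling the two halves), but you then stop, declaring the choice of strengthened hypothesis to be ``the technical heart'' and leaving it unspecified. That choice \emph{is} the paper's contribution, and without it there is no proof. With $C^d=A_d\cup B_d$ the bipartition into colour classes, the right functional is
$$
   \[ A,B\] \;=\; \sum_{b\in B} f(bA)\;-\;\sum_{a\in A} f(A\minus a),
$$
and the strengthened hypothesis (Lemma~\ref{lemma:1}) is $\sum_{v} f(v)\ge \[ A_d,B_d\]+(d-1)\,2^{d-1}$. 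This is what closes: writing $A_{d+1}=A_d A'_d$ and $B_{d+1}=B_d B'_d$ for the colour classes in the two subcubes, one instance of (c$'$) together with a chain (c$'$)--(e$'$)--(c$'$) for each of the $2^d$ matching edges gives $\[ A_d,B_d\]+\[ A'_d,B'_d\]\ge \[ A_{d+1},B_{d+1}\]+2^d$, and at the end $\[ A_d,B_d\]\ge 2^{d-1}$ follows from strong monotonicity applied termwise after pairing $A_d$ with $B_d$. Your description ``$B$ ranges over independent sets in the subcube opposite $v$'' points the wrong way: the conditioning set is the fixed colour class $A_d$ of the \emph{same} cube as $v$, not a set living in the other subcube. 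Two minor corrections: the natural base case is $d=1$, where the strengthened inequality is just submodularity $f(a)+f(b)\ge f(ab)$; and $f(v)\ge 1$ does not follow from (d) alone since a singleton is independent---you need (c) together with (d), via $f(v)+f(w)\ge f(vw)\ge f(w)+1$.
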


\medbreak\noindent
We note that this statement is not true for $d=1$. The 1-dimensional 
``cube'' is the graph
with two vertices and an edge between them. In this graph both the worst 
case and average information ratio is equal to $1$, and not to $1/2$. The
$2$-di\-men\-sion\-al ``cube'' is the square, i.e.~a cycle on four vertices, 
which is a complete bipartite graph. Thus both worst case and average
information ratio of $C^2$ is $1$, in full agreement with the statement.

\begin{proof}
First we prove that this ratio is at most $d/2$. To this end we construct a
perfect secret 
sharing scheme witnessing this value. The construction uses Stinson's
decomposition theorem from \cite{kn:stinson}.

\def\s{\mathbf{s}}
\def\x{\mathbf{x}}

Let $F$ be a sufficiently large finite field, and $X$ be the
$(d-1)$-di\-men\-sion\-al vector space over $F$. For every $2$-face of the cube
choose a vector $\x_i\in X$ in such a way that any $d-1$ of these vectors span the 
whole vector space $X$. (This
is the point where we use the fact that $F$ is sufficiently large.) The
vectors $\x_i$ are public information, and the
secret is a random element $\s\in X$. For each vector $\x_i$ take the
inner product $a_i = \s\cdot\x_i$. Clearly, given any $(d-1)$ of these inner
products, one can
recover the secret $\s$. Now suppose the $i$-th $2$-face has vertices $v_1$,
$v_2$, $v_3$, $v_4$ in this order. Distribute $a_i$ among these vertices
as follows.
Choose a random element $r\in F$ and give it to $v_1$ and $v_3$, and
give $r+a_i$ (computed in the field $F$) to $v_2$ and $v_4$. Any edge of
this
$2$-face can recover $a_i$, thus any edge of the $d$-dimensional cube can
recover $d-1$ of the $a_i$'s, and therefore can recover the secret $\s$ as well.
Now consider the values an independent set of the vertices possess. All
different values in this set can be chosen independently and randomly from
$F$, thus they are (statistically) independent of the secret $\s$. 

We have verified that this is a perfect secret sharing system. The 
secret is
a $(d-1)$-tuple from the field $F$. Each vertex is given as many elements
from $F$ as many $2$-faces it is in, namely $d\choose 2$ elements. Therefore 
both worst case and average information ratio for this scheme is ${d\choose 2}/(d-1) = d/2$, which
proves the upper bound.

\smallskip

Before handling the lower bound, observe that the worst case and the average case information ratio 
for cubes must coincide. The reason is that $C^d$
is highly symmetrical. Let $H$ be the automorphism group of the graph $C^d$, 
this group has $2^d\cdot d!$ elements. If $v_1$ and $v_2$ are two (not
necessarily different) vertices of $C^d$, then the number of 
automorphisms $\pi\in H$ 
with $\pi(v_1)=v_2$ is exactly $|H|/|C^d| = d!$. Now let $\S$ be 
any perfect secret sharing scheme on $C^d$, and
apply $\S$ for $\pi C^d$ independently for each $\pi\in H$. The
size of the secret in this compound scheme increases $|H|$-fold, 
and each participant will get 
a share which has size $|H|/|C^d|$-times the sum of all share sizes in $\S$. 
Therefore in this ``symmetrized'' scheme all participants have the same 
amount of information to remember, consequently all have the same ratio 
which equals to the average ratio of the scheme $\S$.

\smallskip
Thus to prove that $d/2$ is also a lower bound for both the worst case and
average information ratio of $C^d$ it is enough to show that for any real valued
function $f$ satisfying properties (a)--(e) enlisted in section \ref{s:defs}
we have
$$
      \sum \{ f(v) \,:\, v\in V \} \ge {d\over 2}|V| .
$$
This is exactly what we will do.

Split the vertex set of the $d$-dimensional cube $C^d$ into two 
equal 
parts in a ``chessboard-like'' fashion: $C^d=A_d\cup B_d$, 
where $A_d$ and $B_d$ are disjoint, independent, and
$|A_d| = |B_d| = 2^{d-1}$. Vertices in $A_d$ only have neighbors in
$B_d$, and vertices in $B_d$ only have neighbors in $A_d$. The
$(d+1)$-dimensional cube consist of two disjoint copy of the $d$ dimensional
cube at two levels, and there is a perfect matching between the
corresponding vertices. Each edge of $C^{d+1}$ 
is either a vertex of one of the lower dimensional cubes, or is a member 
of the perfect matching. Suppose the vertices 
on these two smaller cubes are split as $A_d\cup B_d$ and $A'_d \cup B'_d$,
respectively, such that the perfect matching is between $A_d$ and $B'_d$, 
and between $B_d$ and $A'_d$. Then the splitting of the vertices of 
the $(d+1)$-dimensional cube can be done as
$$
    A_{d+1} = A_d \cup A'_d \mbox{~~~and~~~} B_{d+1}=B_d\cup B'_d .
$$
Using this decomposition, we can use induction on the dimension $d$. In the 
inductive statement we shall use the following notation:
$$
    \[ A,B \] \= \sum_{b\in B} f(bA) - \sum_{a\in A} f(A\minus a).
$$
When using this notation we implicitly assume that $A$ and $B$ have the same cardinality.

\begin{lemma}\label{lemma:1}
For the $d$-dimensional cube with the split $C^d=A_d\cup B_d$ we have
\begin{equation}\label{eq1}
    \sum_{v\in C^d} f(v) \ge \[ A_d, B_d \]  + (d-1)2^{d-1}.
\end{equation}
\end{lemma}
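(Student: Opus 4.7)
The plan is to prove Lemma~\ref{lemma:1} by induction on the dimension $d$. For the base case $d=1$, $C^1$ consists of two vertices $a,b$ joined by an edge with $A_1=\{a\}$ and $B_1=\{b\}$, so the inequality reduces to $f(a)+f(b)\ge f(ab)=\[\{a\},\{b\}\]$, which is immediate from submodularity (the additive term $(d-1)2^{d-1}$ vanishes).

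For the inductive step $d\to d+1$, use the decomposition of $C^{d+1}$ into two disjoint copies $C^d$ and $C^{d'}$ of dimension $d$ joined by a perfect matching of $2^d$ edges, with $A_d\leftrightarrow B'_d$ and $B_d\leftrightarrow A'_d$, as described in the paper. Applying the inductive hypothesis to each copy and summing yields
$$
  \sum_{v\in C^{d+1}} f(v) \;\ge\; \[A_d,B_d\] + \[A'_d,B'_d\] + (d-1)\cdot 2^d,
$$
so, since the target is $\sum_v f(v)\ge\[A_{d+1},B_{d+1}\]+d\cdot 2^d$, the remaining task is to prove
$$
  \[A_d,B_d\] + \[A'_d,B'_d\] \;\ge\; \[A_{d+1},B_{d+1}\] + 2^d. \qquad(\ast)
$$

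To prove $(\ast)$ I would apply property~(e) once for each of the $2^d$ matching edges. For each $v\in B_{d+1}$, the sets $U=vA_d$ and $V=vA'_d$ are both non-independent --- one of them contains an internal $C^d$- or $C^{d'}$-edge through $v$, and the other contains the matching edge incident to $v$ --- while $U\cap V=\{v\}$ is independent; so (e) gives
$$
 f(vA_d)+f(vA'_d) \;\ge\; 1 + f(v) + f(vA_{d+1}).
$$
Summing these $2^d$ inequalities and expanding the brackets using the definition of $\[\cdot\,,\,\cdot\]$ produces the desired $+2^d$ on the right-hand side of $(\ast)$, together with a residual mixing the cross-copy terms $f(bA'_d), f(b'A_d)$, the deletion terms $f(A_d\minus a), f(A'_d\minus a'), f(A_{d+1}\minus u)$, and the singleton values $f(v)$. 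In the base case $d=1$ this residual collapses pleasantly to the chain $f(u)+f(w)\ge f(uw)$ summed over each matching edge $(u,w)$, so that each vertex contributes exactly once.

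The main technical obstacle is closing up this residual for $d\ge 2$, where the simple per-matching-edge bound used in the $d=1$ case is no longer enough. One must combine it with the submodular diminishing-returns estimates $f(A_d)-f(A_d\minus a)\ge f(A_{d+1})-f(A_{d+1}\minus a)$ (and the analogous statement for $a'\in A'_d$) in a way that the cross-copy residuals cancel exactly. This is where the asymmetric signed shape of the bracket $\[A,B\]$ --- its positive sum $\sum_b f(bA)$ of $|B|$ terms balanced against its negative sum $\sum_a f(A\minus a)$ of $|A|=|B|$ terms --- becomes essential, as the two halves of the bracket correspond to the two sides of the bipartite matching and their contributions dovetail so that the residual vanishes. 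This dovetailing is the ``carefully crafted'' feature of the induction hypothesis alluded to in the introduction.
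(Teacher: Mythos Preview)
Your induction framework and base case match the paper's exactly, and the reduction of the inductive step to the inequality
\[
\[A_d,B_d\] + \[A'_d,B'_d\] \;\ge\; \[A_{d+1},B_{d+1}\] + 2^d
\]
is also correct. The gap is in how you try to prove this. Applying~(e) to the pair $(vA_d,\,vA'_d)$ for each $v\in B_{d+1}$ does manufacture the constant $2^d$, but the price is a residual containing the cross-copy terms $\sum_{b'\in B'_d} f(b'A_d)$, $\sum_{b\in B_d} f(bA'_d)$ and the singleton sum $\sum_{v\in B_{d+1}} f(v)$. You yourself flag this as ``the main technical obstacle'' and then stop; the ``diminishing-returns'' inequalities you mention have the wrong shape to absorb these leftovers (they relate $f(A_d\minus a)$ to $f(A_{d+1}\minus a)$, but do nothing about $f(b'A_d)$ versus $f(v)$), and the hoped-for dovetailing is asserted rather than shown. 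As written, the argument is incomplete precisely at the point that carries all the content.

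The paper avoids this residual entirely by applying strong submodularity at a different place. For each matching edge $(a',b)$ with $b\in B_d$, $a'\in A'_d$, pick also a neighbour $a\in A_d$ of $b$ and run the chain
\[
f(A'_d)-f(A'_d\minus{a'}) \;\ge\; f(bA'_d)-f(bA'_d\minus{a'}) \;\ge\; 1+f(baA'_d)-f(baA'_d\minus{a'}) \;\ge\; 1+f(bA_dA'_d)-f(bA_dA'_d\minus{a'}),
\]
together with the plain submodular inequality $f(bA_d)-f(A_d)\ge f(bA_dA'_d\minus{a'})-f(A_dA'_d\minus{a'})$. Adding these two lines (and the symmetric ones for the matching edges $(a,b')$) produces, for every matched pair, exactly one term of each of the forms $f(bA_{d+1})$ and $-f(A_{d+1}\minus{a'})$ that appear in $\[A_{d+1},B_{d+1}\]$, while the $f(A_d)$ and $f(A'_d)$ terms cancel in pairs. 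There are no cross-copy or singleton leftovers to clean up. The crucial idea you are missing is to insert the single use of~(e) in the \emph{middle} of a submodular chain so that its output already has the target shape $f(\cdot\,A_{d+1})$, rather than applying~(e) at the top level and hoping the debris cancels.
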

\begin{proof}
First check this inequality for $d=1$. The $1$-cube has two
connected vertices $a$ and $b$. Then, say, $A_1=\{a\}$, $B_1=\{b\}$, 
and equation (\ref{eq1}) becomes
$$
     f(a)+f(b) \ge  f(ab) - f(\emptyset) + 0,
$$
which holds by the submodular property (c) of the function $f$.

Now suppose (\ref{eq1}) holds for both $d$-dimensional subcubes of the
$(d+1)$-di\-men\-sion\-al cube with split $A_{d+1}=A_d\cup A'_d$, and 
$B_{d+1}=B_d\cup B'_d$ as discussed above. Then by the inductive hypothesis,
\begin{eqnarray}
   \sum_{v\in V_{d+1}} f(v)
   &=& \sum_{v\in V_d} f(v) + \sum_{v'\in V'_d} f(v') \nonumber\\
   &\ge& \[ A_d,B_d\] + \[ A'_d, B'_d \] + (d-1)2^d . \label{eq:todo} 
\end{eqnarray}
Each $b\in B_d$ is connected to a unique $a'\in A'_d$, let $(a',b)$ be
such a pair. Then
\begin{equation}\label{eq2}
       f(bA_d) - f(A_d) \ge f(bA_dA'_d\minus{a'}) - f(A_dA'_d\minus{a'})
\end{equation}
by submodularity. Now let $a\in A_d$ be any vertex which is
connected to $b\in B_d$. As $b$ is connected to both $a$ and $a'$, 
both $bA'_d$ and $abA'_d\minus{a'}$ are 
qualified (i.e.~not independent) subsets, while their intersection,
$bA'_d\minus{a'}$, is independent. Therefore the strong submodularity yields
$$
     f(bA'_d) - f(bA'_d\minus{a'}) \ge 1+ f(baA'_d) - f(baA'_d\minus{a'}).
$$
Using this inequality and the submodularity twice we get
\begin{eqnarray*}
   f(A'_d) - f(A'_d\minus{a'}) &\ge& f(bA'_d) - f(bA'_d\minus{a'}) \\
   &\ge& 1+   f(baA'_d)-f(baA'_d\minus{a'}) \\
   &\ge& 1+ f(bA_dA'_d)-f(bA_dA'_d\minus{a'}).
\end{eqnarray*}
Adding (\ref{eq2}) to this inequality, for each connected 
pair $(a',b)$ from $a'\in A'_d$ and $b\in B_d$ we have
$$
  f(bA_d)-f(A_d)+f(A'_d)-f(A'_d\minus{a'}) \ge
       1 +  f(bA_dA'_d) - f(A_dA'_d\minus{a'}) .
$$
By analogy we can swap $(A_d,B_d)$ and $(A'_d,B'_d)$ yielding 
$$
  f(b'A'_d) - f(A'_d) + f(A_d) - f(A_d\minus a) \ge
       1 + f(b'A_dA'_d) - f(A_dA'_d\minus a) 
$$
for each connected pair $(a,b')$ from $a\in A_d$ and $b'\in B'_d$.
There are $2^{d-1}$ edges between $A'_d$ and $B_d$, and also $2^{d-1}$
edges between $A_d$ and $B'_d$. Thus adding up all of these $2^d$
inequalities, on the left hand side all $f(A_d)$ and $f(A'_d)$
cancel out, and the remaining terms give
$$
    \[ A_d,B_d \] + \[ A'_d, B'_d \] \ge \[ A_dA'_d, B_dB'_d \] + 2^d .
$$
Combining this with (\ref{eq:todo}) we get
$$
 \sum_{v\in V_{d+1}} f(v) \ge
   \[ A_dA'_d , B_dB'_d \] + (d-1)2^d + 2^d .
$$
This is inequality (\ref{eq1}) written for $d+1$ instead of $d$. This
completes the induction step.
\end{proof}

We continue with the proof of theorem \ref{thm:1}. Let $C^d = A_d \cup B_d$ be 
the disjoint ``chessboard'' splitting of the vertices. 
As there are exactly $2^{d-1}$ vertices in both $A_d$ and
$B_d$, we can match them. If $(a,b)$ is such a matched pair, then by
strong monotonicity
$$
      f(bA_d) - f(A_d\minus a) \ge 1,
$$ 
as $A_d\minus a$ is independent, while $bA_d$ is not. Adding up these 
inequalities we get
$$
     \[ A_d, B_d \] =  \sum_{b\in B_d} f(bA_d) - \sum_{a\in A_d} f(A_d\minus a) \ge 2^{d-1}.
$$
This, together with the claim of Lemma \ref{lemma:1} gives
$$
      \sum_{v\in V_d} f(v) \ge (d-1)2^{d-1} + 2^{d-1} = d2^{d-1}.
$$
There are $2^d$ vertices in $V_d$, thus the average value of $f$ on 
the vertices of $V_d$ is at least $d/2$. This shows that the average 
information ratio of the $d$-dimensional cube is at least $d/2$. From this
it follows that the worst case information ratio is also at least $d/2$.
\end{proof}

\section{The case of the lattice}\label{s:lattice}

The vertices of the {\em $d$-dimensional lattice $L^d$} are
the integer points of the $d$-di\-men\-sion\-al Euclidean space,
i.e.~points having integer coordinates only. Two vertices are connected 
if their distance is exactly $1$, i.e.~if they differ in a single 
coordinate, and the difference in that coordinate is exactly 1. Of course,
$L^d$ is an infinite graph.

Each vertex in $L^d$ has degree $2d$, and the whole graph is {\em edge transitive}.
Namely, given any two edges $v_1v_2$ and $w_1w_2$ from $L^d$, there is an 
automorphism of $L^d$
which maps $v_1$ to $w_1$ and $v_2$ to $w_2$.

Defining information ratio for an infinite graph is not straightforward.
A systematic treatment of the topic can be found in
\cite{csirmaz:tatramountains}. We remark that using the right
definitions all intuitively true statements remain true, among others 
Stinson's decomposition theorem \cite{kn:stinson}.

As $L^1$ is the infinite path, its ratio is $3/2$. For larger dimensions we
have

\begin{theorem}\label{thm:2}
For $d\ge 2$ the information ratio of the $d$ dimensional lattice $L^d$ 
is $d$.
\end{theorem}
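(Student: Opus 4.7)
The plan is to mirror the two-part blueprint of the proof of Theorem~\ref{thm:1}, adapting each half to the infinite graph $L^d$ via finite cubical approximations. For the upper bound I would carry over the $2$-face construction of Theorem~\ref{thm:1} verbatim: in $L^d$ every edge lies in exactly $2(d-1)$ unit squares (for each of the $d-1$ other directions the square extends in either of the two orientations), and every vertex lies in $2d(d-1) = 4{d\choose 2}$ unit squares, so the unit squares form a $2(d-1)$-cover of the edge set. Applying Stinson's decomposition in the infinite-graph form of \cite{csirmaz:tatramountains} to this cover---with each $2$-face carrying the ideal $K_{2,2}$-scheme built as in Theorem~\ref{thm:1}, namely: pick $\mathbf{x}_i$ generically in a $2(d-1)$-dimensional space over a large field $F$, set $a_i=\mathbf{s}\cdot\mathbf{x}_i$, and distribute $a_i$ to the four corners via a random one-time pad---yields a valid scheme of information ratio $2d(d-1)/(2(d-1)) = d$.

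For the lower bound I would use vertex- and edge-transitivity of $L^d$ together with the symmetrization trick from the proof of Theorem~\ref{thm:1} to collapse the worst-case and average ratios, and then by Fact~\ref{fact:spanned} reduce the claim to showing that for every $f$ satisfying (a)--(e) the average of $f$ on the cubical grid $Q_n^d=\{0,1,\ldots,n\}^d\subset L^d$ tends to $d$ as $n\to\infty$. The heart of the argument is an analog of Lemma~\ref{lemma:1}, proved by induction on $d\ge 2$: with the chessboard split $Q_n^d=A\cup B$,
$$
\sum_{v\in Q_n^d} f(v) \;\ge\; \[A,B\] \;+\; \gamma_d(n)\cdot(n+1)^d,
$$
with a per-vertex bonus satisfying $\gamma_d(n)\to d-\frac{1}{2}$ as $n\to\infty$. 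Combining this with $\[A,B\]\ge|A|\ge\frac{1}{2}(n+1)^d$ up to an $O(n^{d-1})$ boundary correction (from strong monotonicity applied to a perfect matching between $A$ and $B$, exactly as at the end of Theorem~\ref{thm:1}) yields $\sum_v f(v)\ge d\cdot(n+1)^d$ up to lower-order terms, i.e.\ average ratio tending to $d$.

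The induction step from $d$ to $d+1$ stacks $n+1$ copies of $Q_n^d$ as consecutive layers of $Q_n^{d+1}$ so that the alternating chessboard splits of the layers fit the global chessboard split. The $n$ inter-layer perfect matchings are processed exactly as in the proof of Lemma~\ref{lemma:1}: for each inter-layer edge, pairing a submodularity step with a strong-submodularity step on two overlapping $2$-sets meeting in a single independent vertex gives an additive $+1$, so summing over all $n\cdot(n+1)^d$ such edges provides the required bonus. Divided by the $(n+1)^{d+1}$ vertices of $Q_n^{d+1}$ this is an increment of $n/(n+1)\to 1$ per vertex per dimension, yielding the recursion $\gamma_{d+1}(n)=\gamma_d(n)+n/(n+1)$---exactly the amount by which $\gamma_d$ must grow with each dimension. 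The main obstacle is the combinatorial bookkeeping: in contrast to the cube, which had only two layers and one matching, here one must combine $n+1$ per-layer brackets $\[A_k,B_k\]$ into a single global bracket $\[A,B\]$ via $n$ successive applications of the inter-layer argument, and must handle the top and bottom layers together with the lateral boundaries of each $Q_n^d$ so that the total boundary error stays of lower order than $(n+1)^d$. The base case $d=2$ (giving $\gamma_2(n)\to 3/2$) must be verified directly on the $n\times n$ grid by applying submodularity on its unit squares and strong submodularity on pairs of incident edges; it is here that the doubled edge multiplicity of the lattice relative to the cube first manifests as the extra per-vertex bonus in $\gamma_d$ compared with the cube's $(d-1)/2$.
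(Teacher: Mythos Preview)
Your proposal is correct and the lower-bound half is essentially the paper's own argument: finite cubical grids, chessboard split, induction on $d$ where each step merges the $n+1$ layers one at a time via the perfect matchings and picks up a $+1$ per inter-layer edge from strong submodularity. The paper formalises the single merge step as a separate lemma (Lemma~\ref{lemma:2}, where $A^*,B^*$ absorb the already-merged layers) and, more importantly, starts the induction at $d=1$ rather than $d=2$: the one-dimensional base case is just a path (Lemma~\ref{lemma:base}), which is noticeably easier than the direct $2$-dimensional grid computation you sketch. Your $d=2$ base case would of course follow from the path lemma plus one merge, so there is no real obstacle, but as written it is the least developed part of your plan.

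The upper bounds are genuinely different. The paper does \emph{not} use the $2$-face cover: instead it tiles $L^d$ by unit $d$-cubes in a chessboard pattern so that every vertex lies in exactly two cubes and every edge in exactly one, and then simply reuses the $d/2$-ratio scheme of Theorem~\ref{thm:1} independently on each cube, giving $2\cdot(d/2)=d$. This is cleaner because it recycles the finite result wholesale and sidesteps the issue your construction faces of choosing generic vectors $\mathbf{x}_i$ for infinitely many squares; you would need to argue (e.g.\ by periodicity modulo $(2\mathbb{Z})^d$) that a finite field suffices so that the $2(d-1)$ vectors around every edge are independent. Your route is perfectly valid once that point is handled, and has the virtue of being self-contained rather than relying on Theorem~\ref{thm:1}.
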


\begin{proof}
First we show that $d$ is an upper bound. This requires a construction of
a perfect secret sharing scheme 
in
which every vertex should remember at most $d$ times as much information 
as there is in the secret. Let $v$ be a vertex of $L^d$ whose all 
coordinates have the same parity -- i.e.~either all are odd or 
all are even integers. Increase each coordinate of $v$  
either by $0$ or $1$. The resulting $2^d$ points form a $d$-di\-men\-sion\-al
cube. Consider all of these cubes. They fill the whole space in a
chessboard-like fashion. Each vertex of $L^d$ belongs to exactly two such
cubes: one starting form a point with even coordinates only, and one starting 
from a point with odd coordinates only. Furthermore 
each edge of $L^d$ belongs to exactly one of these cubes.

Distribute the secret in each of these (infinitely many) cubes 
independently. By Theorem \ref{thm:1} this can be done so that each vertex 
of the cube gets exactly $d/2$ bits for each bit in the secret. As each
vertex in $L^d$ is in exactly two cubes, each vertex gets two times
$d/2$ bits. And as each vertex of $L^d$ is a vertex in some cube, endpoints
of a vertex can recover the secret.

The distribution of the shares in each cube was made by a perfect system,
and random values were chosen independently for each cube. Therefore 
independent
subsets of $L^d$ have no information on the secret. This proves that
$d$ is an upper bound for both the average and worst case information ratio.

Proving that $d$ is also a lower bound first we prove a generalization
of Lemma \ref{lemma:1}. To
describe the setting, suppose we have a graph with vertices split into six
disjoint sets $(A\cup A^*) \cup (B\cup B^*)\cup (A'\cup B')$. Subsets
$A\cup A^*\cup A'$ and $B\cup B^*\cup B'$ are independent, 
the cardinality of the subsets $A$, $A'$, $B$, and $B'$ are equal, furthermore
$|A^*| = |B^*|$.
There are $|A|=|B'|$ many edges between $A$ and $B'$ which form a perfect
matching, and there are $|A'|=|B|$ many edges between $A'$ and $B$ which
also form a perfect matching. All other edges of the graph are connecting
two vertices either from $A\cup A^*$ and $B\cup B^*$, or from $A'$ and $B'$.
This means, for example, that each $a'\in A'$ is connected to
exactly one member of $B$, and there is no edge, for example, between 
$B'$ and $A^*$.

\begin{lemma}\label{lemma:2}
With the notation above, let $|A| = |B| = |A'| = |B'| = k$. Suppose
moreover that each $b\in B$ is connected to some $a\in A\cup A^*$, and 
each $b'\in B'$ is connected to some $a'\in A'$. Then
$$
\[ AA^*, BB^* \] + \[ A', B' \] \ge 2k + \[ A'AA^* , B'BB^* \].
$$
\end{lemma}

\begin{proof}
As in the proof of Lemma \ref{lemma:1}, for $b\in B$ let $a'\in A'$ 
be the
only vertex connected to in $A'$, and let $a\in A\cup A^*$ which
$b$ is connected to as well. Then using submodularity and strong
submodularity,
$$
    f(bAA^*) - f(AA^*) \ge f(bAA^*A'\minus {a'}) - f(AA^*A'\minus{a'}),
$$
and
\begin{eqnarray*}
    f(A')-f(A'\minus {a'}) &\ge& f(bA') - f(bA'\minus{a'}) \\
      &\ge& 1+ f(baA') - f(baA'\minus{a'}) \\
      &\ge& 1+ f(bAA^*A') - f(bAA^*A'\minus{a'}).
\end{eqnarray*}
On the other hand, if $b'\in B'$ is connected to $a\in A$, and $a'\in A'$,
then
$$
    f(b'A') - f(A') \ge f(b'A'A^*A\minus a) - f(A'A^*A\minus a),
$$
and
\begin{eqnarray*}
    f(AA^*) - f(AA^*\minus a) &\ge& f(b'AA^*)-(b'AA^*\minus a) \\
      &\ge& 1+ f(b'a'AA^*) - f(b'a'AA^*\minus a) \\
      &\ge& 1 + f(b'A'AA^*) - f(b'A'AA^*\minus a).
\end{eqnarray*}
Summing up all of these inequalities, $2k$ in total, $f(AA^*)$ and $f(A')$ are
canceled out, and we get
$$\eqalign{
& \Big( \sum_{b\in B} f(bAA^*) - \sum_{a\in A} f(AA^*\minus a) \Big)
   + \Big( \sum_{b'\in B'} f(b'A') - \sum_{a'\in A'} f(A'\minus{a'}) \Big)
\cr
&~~~~~ \ge 2k + \sum_{b\in B\cup B'} f(bAA^*A') - 
       \sum_{a\in A\cup A'} f(AA^*A'\minus a) .
}$$
The missing part, namely that
$$
   \sum_{b\in B^*} f(bAA^*) - \sum_{a\in A^*} f(AA^*\minus a)
      \ge \sum_{b\in B^*} f(bAA^*A') - \sum_{a\in A^*} f(AA^*A'\minus a)
$$
follows immediately from submodularity and from $|A^*|=|B^*|$.
\end{proof}

As we will use Lemma \ref{lemma:2} inductively, we need to consider
the base case first, namely when the dimension is $1$. The 1-di\-men\-sion\-al
lattice is an infinite path; we handle its finite counterparts. Thus
let $k\ge 2$ be an even number, and let $a_1$, $b_1$, $\dots$, $a_{k/2}$,
$b_{k/2}$ be the vertices, in this order,  of a path of length $k$. Let
$A$ be the set of odd vertices, and $B$ be the set of even vertices.

\begin{lemma}\label{lemma:base}
For each path $P$ of even length $k\ge 2$, 
\begin{equation}\label{eq:lemmabase}
\sum_{v\in P} f(v) \ge \[ A, B\] + {k\over 2}-1.
\end{equation}
\end{lemma}

\begin{proof}
By induction on the length of the path. When $k=2$, i.e.~the
graph consists of two connected vertices $a$ and $b$ only, then by
submodularity
$$
   f(a)+f(b) \ge f(ab) = \[ \{a\}, \{b\} \],
$$
which is just the statement of the lemma. 

Now let the first two vertices on the path be $a'$ and $b'$ (in this order),
and let $A^*$ be the set of odd vertices except for $a'$, and $B^*$ be the
set of even vertices except for $b'$. Add two extra vertices, $a''$, and
$b''$ to beginning of the path. The lemma follows by induction on the 
length of the path if we show that
$$
    f(a'')+f(b'') + \[ A^*a', B^*b'\] \ge 1+ \[ A^*a'a'', B^*b'b'' \] .
$$
Now $f(a'')+f(b'') \ge f(a''b'')$, and by submodularity
$$
    \sum_{b\in B^*} f(ba'A^*) - \sum_{a\in A^*} f(a'A^*\minus a)
     \ge \sum_{b\in B^*} f(ba'a''A^*) - \sum_{a\in A^*}f(a'a''A^*\minus a) ,
$$
thus it is enough to show that
$$
   f(a''b'') + f(b'a'A^*) - f(A^*) \ge 1 + f(b'a'a''A^*) +f(b''a'a''A^*) 
          - f(a'A^*) - f(a''A^*).
$$
But this is just the sum of the following three submodular inequalities:
$$\eqalign{
   f(a''b'') -f(b'') &{}\ge 1 + f(b''a'a''A^*) - f(b''a'A^*)\cr
   f(b'')    &{}\ge f(b''a'A^*) - f(a'A^*)\cr
   f(b'a'A^*) - f(A^*) &{}\ge f(b'a'a''A^*) - f(a''A^*);
}$$
the first inequality holds as both $a''b''$ and $b''a'$ are 
edges in the graph.
\end{proof}

Now let $k$ be an even number, and let $L^d_k$ be the spanned subgraph of the
the $d$-dimensional lattice $L^d$ where only vertices with all coordinates 
between $0$ and $k$ inclusive are considered.
Thus, for example $L^d_2$ is just the $d$-dimensional cube with two
vertices along each dimension. As $L^d_k$ is a spanned subgraph of
$L^d_\ell$ whenever $k\le \ell$, the average information ratio of $L^d_k$
(not necessarily strictly) increases with $k$.
Observe also that every finite spanned subgraph of
$L^d$ is isomorphic to a spanned subgraph of $L^d_k$ for every large enough 
$k$. Thus the average information ratio of $L^d$ is the limit 
of the average information ratio of $L^d_k$ as $k$ tends to infinity. In the
sequel we estimate this latter value.

As in the proof of Theorem \ref{thm:1}, split the vertices of
$L^d_k$ into two disjoint sets $A^d_k$ and $B^d_k$ in a ``chessboard-like''
fashion so that both sets are independent, and contain just half of the
vertices: $|A^d_k| = |B^d_k| = k^d/2$.

\begin{lemma}\label{lemma:dk}
With the notation as above,
$$
   \sum_{v\in L^d_k} f(v) \ge \[ A^d_k, B^d_k \] + 
       d(k^d-k^{d-1}) - {k^d\over 2}.
$$
\end{lemma}

\begin{proof}
For $d=1$ this is the claim of lemma \ref{lemma:base}. For larger $d$
we use induction on $d$. The $(d+1)$-di\-men\-sion\-al lattice $L^{d+1}_k$
consist of just $k$ levels of $L^d_k$ with a perfect matching between the
levels. Thus we can apply lemma \ref{lemma:2} $(k-1)$ times, each
application increases the constant by the number of vertices on the new
level, i.e.~by $k^d$. Thus the constant for $(d+1)$ is $k$ times the
constant for $d$, plus $(k-1)$ times $k^d$. From here an easy calculation
finishes the proof.
\end{proof}

\begin{theorem}\label{thm:lattice}
The average information ratio of the $d$ dimensional lattice of edge length
$k$ is at least $d(1-1/k)$.
\end{theorem}

\begin{proof}
Using the notations of lemma \ref{lemma:dk}, observe that $\[ A^d_k, B^d_k
\]$ can be written as the sum of $k^d/2$ differences. Each of these
differences have value $\ge 1$ by the strong monotonicity, since the first
subset contains an edge, while the second one is independent. Thus $\[
A^d_k, B^d_k \] \ge k^d/2$. Using this, lemma \ref{lemma:dk} gives
$$
    \sum_{v\in L^d_k} f(v) \ge d(k^d-k^{d-1}).
$$
As there are $k^d$ vertices in $L^d_k$, the claim of the theorem follows.
\end{proof}

Setting $k=2$ here, we get, as a special case, that the average information
ratio of the $d$-dimensional cube is at least $d/2$. This was the hard part of
Theorem \ref{thm:1}.

Now we can finish the proof of Theorem \ref{thm:2}. We have seen that $d$ is
an upper bound for the worst case information ratio of the $d$-dimensional
lattice $L^d$. In Theorem \ref{thm:lattice} we gave the lower bound
$(d-d/k)$ for the graph $L^d_k$, which can be embedded as a spanned subgraph
into $L^d$. Thus the average information ratio of $L^d$ is larger than or
equal to the supremum of $(d-d/k)$ as $k$ runs over the even integers. 
Thus $d$ is $\le$ the average information ratio of $L^d$, which is $\le$ the
worst case information ratio of $L^d$, which is $\le d$. Thus all these values are
equal, which proves the theorem.
\end{proof}

\section{Conclusion}\label{s:conclusion}

Determining the exact amount of information a participant must remember in a
perfect secret sharing scheme is an important problem both from theoretical
and practical point of view. Access structures based on graphs pose special
challenges. They are easier to define, and have a more transparent structure
compared to general access structures. 
Research along this line poses challenges, see \cite{seven}.
Developing a new technique, we determined the {\em exact} information ratio
of the $d$-dimensional cube to be $d/2$. 
Previously this value was known to be between $d/4$ and $(d+1)/2$.

We also determined the information ratio of the (infinite) $d$-dimensional 
lattice, which turned out to be $d$.
During the proof we
estimated the information ratio of the ``finite'' lattice cube $L^d_k$ 
which has exactly $k$ vertices along each dimension. While the estimate 
was enough to get the information ratio of the infinite lattice, the 
exact (average, or worst case) information ratio for the finite graph $L^d_k$ 
remains an open problem. 

To get a better bound for the average information ratio,
consider the following secret sharing scheme.
Use the construction of Theorem \ref{thm:2}
only inside $L^d_k$, and for the missing edges on the surface use similar 
construction but with one dimension less. In this scheme
inner vertices 
will receive a total 
of $d$ bits, while vertices on the surface will receive $1/2$ bit less. Thus 
the sum the size of all shares is
$$
    d k^d - {1\over 2}\big(k^d - (k-2)^d \big)
        \approx d k^d - d k^{d-1} ,
$$
as there are $(k-2)^d$ inside vertices in $L^d_k$. Comparing this to the 
bound in Theorem \ref{thm:lattice}, the two values are approximately 
equal, but still remains a discrepancy.

Determining the worst case information ratio of $L^d_k$ seems to be a harder 
problem.
We conjecture that for $d\ge 2$, $k\ge 4$ this value equals to $d$, 
i.e.~the average information rate for the whole infinite lattice.
This conjecture was verified for $d=2$ in \cite{csirmaz:tatramountains}.

\end{document}